\newtheorem{theorem}{Theorem}
\newtheorem{lemma}{Lemma}
\newcommand\myeqa{\stackrel{\mathclap{\normalfont\mbox{(a)}}}{=}}
\newcommand\myeqb{\stackrel{\mathclap{\normalfont\mbox{(b)}}}{=}}
\newcommand*{\rom}[1]{\expandafter\@slowromancap\romannumeral #1@}
\begin{document}

\title{One-Bit OFDM Receivers via Deep Learning}
\author{{
    Eren Balevi and
    Jeffrey G. Andrews}\\
\thanks{The authors are with the University of Texas at Austin, TX, Email: erenbalevi@utexas.edu, jandrews@ece.utexas.edu.  This work has been supported by Samsung Electronics and NSF grant CCF-1514275.  This paper was presented in part at the Asilomar Conference, Pacific Grove, CA, October 28-31, 2018 \cite{ebalevi}.}				
}
\maketitle 
\normalsize
\begin{abstract}
This paper develops novel deep learning-based architectures and design methodologies for an orthogonal frequency division multiplexing (OFDM) receiver under the constraint of one-bit complex quantization. Single bit quantization greatly reduces complexity and power consumption, but makes accurate channel estimation and data detection difficult. This is particularly true for multicarrier waveforms, which have high peak-to-average power ratio in the time domain and fragile subcarrier orthogonality in the frequency domain. The severe distortion for one-bit quantization typically results in an error floor even at moderately low signal-to-noise-ratio (SNR) such as 5 dB. For channel estimation (using pilots), we design a novel generative supervised deep neural network (DNN) that can be trained with a reasonable number of pilots. After channel estimation, a neural network-based receiver -- specifically, an autoencoder -- jointly learns a precoder and decoder for data symbol detection. Since quantization prevents end-to-end training, we propose a two-step sequential training policy for this model. With synthetic data, our deep learning-based channel estimation can outperform least squares (LS) channel estimation for unquantized (full-resolution) OFDM at average SNRs up to 14 dB. For data detection, our proposed design achieves lower bit error rate (BER) in fading than unquantized OFDM at average SNRs up to 10 dB.  
\end{abstract}

\begin{IEEEkeywords}
Deep learning, OFDM, channel estimation, data detection, one-bit quantization.
\end{IEEEkeywords}

\section{Introduction}
Wireless systems are trending towards ever-higher data rates, which requires ever-more antennas and bandwidth; a canonical example being millimeter wave (mmWave) systems \cite{RapHeaBook}.  Analog-to-digital converters (ADCs) consume a significant fraction of the power in modern receivers \cite{Walden}, which is a key bottleneck to large bandwidth and many antenna systems. One-bit quantization dramatically reduces the power consumption, e.g., by more than two orders of magnitude in some cases \cite{Walden}, and can perform satisfactorily for a large amount of receive antenna combining (which averages the quantization noise) or at low signal-to-noise-ratios (SNRs) \cite{Risi}-\cite{Jacobsson2}. However, one-bit ADCs fundamentally have poor performance at medium and high SNRs \cite{Mo1}, \cite{Mo2} or for the case of few receive antennas. Additionally, OFDM waveforms -- which are the core of the physical layer for virtually all modern high-rate wireless systems -- are more sensitive to one-bit quantization than single carrier systems. This is because OFDM waveforms have a high peak-to-average power ratio, and so one-bit quantization leads to severe inter-carrier interference (ICI) in the frequency domain, where channel estimation and data detection are performed. Yet most prior work has been for single carrier communication.

Recognizing that one-bit quantization introduces strong nonlinearities and other intractable features that render traditional OFDM receiver architectures far from optimal, and motivated by the success of deep learning in many different challenging applications \cite{Mnih}-\cite{Sutskever}, this paper and the design and methodology herein are the result of exploring many possible different neural network architectures for channel estimation and data detection. We consider a single antenna receiver and a moderate (e.g. $64$) number of subcarriers in a frequency selective fading channel and summarize the main contributions in Sect. \ref{sec:cont}.

\subsection{Related Work}
One-bit ADCs have been extensively researched in terms of channel estimation and data detection. Many of these studies have been focused on frequency flat channels for multiple-input multiple-output (MIMO) communication, e.g., see \cite{Risi}-\cite{Mo2}, \cite{Ivrlac}-\cite{Li2}. There have also been a few papers considering frequency selective channels for low resolution ADCs such as \cite{Studer}, which considers OFDM and concludes that $4$-$6$ bits ADCs are required to approach the performance of unquantized OFDM.  A complex nonlinear detector based on iterative turbo processing is proposed in \cite{Wang}, and is capable of detecting QPSK-data symbols in an OFDM waveform with $2$-$3$ bit ADCs, but it is ineffective when paired with a one-bit ADC. \cite{Mollen1}, \cite{Mollen2} demonstrated that one-bit ADCs in linear OFDM receivers for massive MIMO can give the same performance as one-bit ADCs for single carrier waveforms, provided there is an infinite number of channel taps.  Lastly, \cite{Mo-Schniter} studied the channel estimation for a few bit ADCs using the sparsity of the channel.

There has been a growing interest in harnessing the power of deep learning for applications in communication systems. Recently,  \cite{GeoffreyYeLi} presented a robust OFDM detection via deep learning against nonlinear impairments. For nonlinear channels, a recurrent neural network detector has been proposed in \cite{Nariman}. Furthermore, \cite{Oshea}-\cite{Felix} model the end-to-end communication system as an autoencoder for reliable detection without crafting complex modulation and coding schemes. We also use an autoencoder for OFDM detection. However, our work differs from those works because there is quantization before detection that creates a non-differentiable layer, and this hinders end-to-end training, which to our knowledge has not been considered previously. There are two recent papers that use tools from machine learning to handle low resolution quantization in MIMO systems \cite{Jeon1}-\cite{Nguyen}, but they do not consider OFDM and have a quite different receiver architecture. There are also some other MIMO studies that utilize learning based methods, e.g., \cite{Xiao-Chen}.

\subsection{Contributions}
\label{sec:cont}
This paper is composed of two main parts: (i) channel estimation and (ii) data detection. We propose different deep learning models for each part.  These specific models were selected for and adapted to the specifics of these different receiver tasks. 

\textbf{Channel estimation via a novel generative supervised deep learning model}.  We derive an expression to demonstrate that the channel would be estimated perfectly with one-bit ADCs if there was a very large number of pilots. Inspired by this expression, we produce a labeled data set, and train a deep neural network (DNN) accordingly with a limited number of training symbols for single antenna OFDM receivers. The key idea behind this model is to exploit the generalization property of neural networks to reduce the number of pilot symbols sent over the channel. In what follows, the trained DNN itself generates many output samples whose average gives the estimate of the channel taps in the frequency domain. This yields a generative learning model.  Using the formed data set, we first determine the number of sufficient training symbols for the proposed model, and then quantify its performance in terms of mean square error (MSE). Surprisingly, our proposed channel estimation model for one-bit quantized OFDM samples can give lower MSE than the least squares (LS) channel estimation with unquantized OFDM samples at average SNRs up to $14$ dB. 

\textbf{Data detection via an autoencoder that jointly learns a precoder and decoder.}  For data detection, we model the end-to-end OFDM communication system as a single autoencoder to jointly learn a precoder and decoder. However, this autoencoder cannot be trained in an end-to-end manner with the backpropagation algorithm due to the non-differentiable quantization layer. We tackle this problem by proposing a two step sequential training policy.  Accordingly, a decoder is first learned offline irrespective of the channel, then the precoder is learned online in conjunction with the trained decoder, taking into account the channel. The simulation results show the efficiency of the proposed method provided that the number of neurons in the hidden layers is moderately increased, which can be achieved by oversampling (still at one-bit resolution) in either the time or frequency domain. In particular, we can beat the theoretical bit error rate (BER) performance of uncoded unquantized QPSK-modulated data symbols in frequency selective Rayleigh fading at average SNRs up to $10$ dB when the dimension of the hidden layers before quantization is increased by a factor of $4$. 

\textit{Notation}: Matrices $\textbf{A}$ and vectors $\textbf{a}$ are designated as uppercase and lowercase boldface letters. $[\cdot]_{k,n}$ corresponds to the entry of a matrix in the $k^{th}$ row and $n^{th}$ column. Transpose and Hermitian operations are demonstrated by $(\cdot)^T$ and $(\cdot)^H$ respectively. The real and imaginary parts are $\Re(\cdot)$ and $\Im(\cdot)$. Trace of the matrix is referred as $tr[\cdot]$ and $\textbf{I}_N$ is $N \times N$ identity matrix.

\section{Channel Estimation with One-Bit ADCs}\label{Channel Estimation with One-Bit ADCs}
Reliable channel estimation with one-bit ADCs is challenging especially for OFDM, which mainly stems from the increased ICI. To tackle this problem, a novel generative supervised learning model is proposed. As a general rule, the efficiency of a supervised learning model depends on using an appropriate labeled data set, which is non-trivial. To determine a suitable labeled data, a theoretical analysis is done. Then, the proposed supervised learning model is grounded to this analysis to enable reliable channel estimation in OFDM receivers with one-bit ADCs.
\subsection{One-Bit OFDM Signal Analysis}
We assume that the channel experiences block fading.
\begin{figure} [!t] 
\centering 
\includegraphics [width=3.5in, angle = 0]{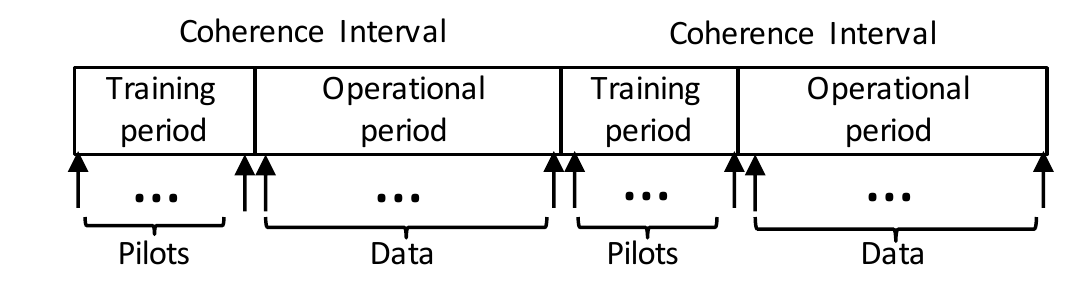}
\caption{Block fading channel model, in which data follows the pilots, and channel changes independently among blocks.}\label{fig:FadingModel}
\end{figure}
This channel is estimated through the pilot symbols $\textbf{s}_p$. These pilots are sent before data transmission starts at the beginning of each channel coherence time interval as demonstrated in Fig. \ref{fig:FadingModel}. The pilot symbols are multiplied by a normalized inverse discrete Fourier transform (IDFT) matrix, and transmitted over the dispersive channel after appending a cyclic prefix (CP). This can be expressed in complex matrix-vector form for $N$ subcarriers as
\begin{equation} \label{rec_chn_est}
\textbf{y}_p= \textbf{H}\textbf{F}^H\textbf{s}_p+\textbf{n}
\end{equation}
where $\textbf{s}_p=[s_{p_0} s_{p_1} \cdots s_{p_{N-1}}]^T$, $\textbf{F}$ is the normalized DFT matrix and so $\textbf{F}^H$ is the normalized IDFT matrix, $\textbf{H}$ is the $N\times N$ circulant channel matrix assuming the CP is removed at the receiver, and $\textbf{n}$ is zero-mean additive white Gaussian noise (AWGN) with variance $\sigma_n^2$. It is well-known that a circulant channel matrix has eigendecomposition 
\begin{equation} \label{circ}
\textbf{H}=\textbf{F}^H\boldsymbol{\Lambda}\textbf{F}
\end{equation}
where $\boldsymbol{\Lambda}$ is a diagonal matrix whose entries indicate the channel taps in the frequency domain, i.e.,
\begin{equation} \label{freqtaps}
H_i=\boldsymbol{\Lambda}_{i,i}
\end{equation}
for $i=0,\cdots, N-1$. 

One-bit quantization of (\ref{rec_chn_est}) with a pair of one-bit ADCs to quantize the real and imaginary part separately results in
\begin{equation} \label{quant}
\textbf{r}_p= \mathcal{Q}(\textbf{y}_p) = \frac{1}{\sqrt{2}}\text{sign}(\Re(\textbf{y}_p))+\frac{j}{\sqrt{2}}\text{sign}(\Im(\textbf{y}_p)).
\end{equation}
Outputs of an OFDM transmitter are time domain samples that can be well approximated by a Gaussian distribution \cite{JeffAndrewsBook}, and any nonlinear function of a Gaussian signal can be expressed in terms of the original Gaussian signal using Bussgang's theorem \cite{Bussgang}. Specifically, the quantization distortion can be defined as \cite{Mollen2}
\begin{equation} \label{quant_dist}
\textbf{d}_p = \textbf{r}_p - \textbf{A}\textbf{y}_p
\end{equation}
or equivalently
\begin{equation} \label{quant_sig}
\textbf{r}_p= \textbf{A}\textbf{y}_p+\textbf{d}_p
\end{equation}
wherein the matrix $\textbf{A}$ makes $\textbf{y}_p$ and $\textbf{d}_p$ uncorrelated to reduce the quantization noise \cite{Li2}, \cite{Mollen2}. That is,
\begin{equation} \label{exp_d_r}
E[\textbf{d}_p\textbf{y}_p^H] = E[\textbf{d}_p]E[\textbf{y}_p^H].
\end{equation}

\begin{lemma}\label{Lemma}
The quantization distortion and the pilots are uncorrelated, and
\begin{equation} \label{exp_d_s}
E[\textbf{d}_p\textbf{s}_p^H] = 0. 
\end{equation}
\end{lemma}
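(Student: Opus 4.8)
The plan is to start from the defining property of the Bussgang matrix $\textbf{A}$ in \eqref{exp_d_r} and propagate it backward through the linear model \eqref{rec_chn_est} until the correlation with the pilots is isolated. First I would observe that, under the Gaussian approximation for the OFDM time-domain samples already invoked above and for zero-mean transmit symbols, the received vector $\textbf{y}_p$ is zero-mean; since the symmetric one-bit map $\mathcal{Q}(\cdot)$ of a symmetric input is again zero-mean, we get $E[\textbf{r}_p]=0$ and hence $E[\textbf{d}_p]=E[\textbf{r}_p]-\textbf{A}E[\textbf{y}_p]=0$. Consequently the uncorrelatedness condition \eqref{exp_d_r} collapses to the plain statement $E[\textbf{d}_p\textbf{y}_p^H]=0$, which is the workhorse for everything that follows.

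Second, I would substitute $\textbf{y}_p=\textbf{H}\textbf{F}^H\textbf{s}_p+\textbf{n}$ into $E[\textbf{d}_p\textbf{y}_p^H]=0$ and expand, using $(\textbf{H}\textbf{F}^H\textbf{s}_p)^H=\textbf{s}_p^H\textbf{F}\textbf{H}^H$, to obtain
\[
E[\textbf{d}_p\textbf{s}_p^H]\,\textbf{F}\textbf{H}^H + E[\textbf{d}_p\textbf{n}^H] = 0 .
\]
The first term carries exactly the quantity \eqref{exp_d_s} that I want to annihilate, so the whole argument reduces to showing that the distortion is uncorrelated with the additive noise, i.e. $E[\textbf{d}_p\textbf{n}^H]=0$.

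Third --- and this is the crux --- I would establish $E[\textbf{d}_p\textbf{n}^H]=0$ by extending Bussgang's theorem from $\textbf{y}_p$ to the noise component. Since the useful time-domain signal $\textbf{H}\textbf{F}^H\textbf{s}_p$ is (approximately) Gaussian and independent of the Gaussian noise $\textbf{n}$, the pair $(\textbf{y}_p,\textbf{n})$ is jointly Gaussian; the generalized Bussgang identity then gives $E[\mathcal{Q}(\textbf{y}_p)\textbf{n}^H]=\textbf{A}\,E[\textbf{y}_p\textbf{n}^H]$ with the \emph{same} gain matrix $\textbf{A}$ that appears in \eqref{quant_dist}, because writing $\textbf{n}=\textbf{B}\textbf{y}_p+\textbf{w}$ with $\textbf{w}$ independent of $\textbf{y}_p$ and $\textbf{A}=E[\textbf{r}_p\textbf{y}_p^H](E[\textbf{y}_p\textbf{y}_p^H])^{-1}$ makes the dependence on $\textbf{w}$ drop out. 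Subtracting $\textbf{A}\,E[\textbf{y}_p\textbf{n}^H]$ then yields $E[\textbf{d}_p\textbf{n}^H]=E[\textbf{r}_p\textbf{n}^H]-\textbf{A}\,E[\textbf{y}_p\textbf{n}^H]=0$. (If one prefers, this can instead be adopted as the standard modeling assumption that the quantization distortion is uncorrelated with the thermal noise; either route makes the term vanish.)

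Finally, with $E[\textbf{d}_p\textbf{n}^H]=0$ the displayed identity reduces to $E[\textbf{d}_p\textbf{s}_p^H]\,\textbf{F}\textbf{H}^H=0$. Because $\textbf{F}$ is unitary and $\textbf{H}$ is full rank whenever no frequency-domain tap $H_i$ vanishes, the matrix $\textbf{F}\textbf{H}^H$ is invertible; right-multiplying by its inverse gives $E[\textbf{d}_p\textbf{s}_p^H]=0$, which is \eqref{exp_d_s}. The main obstacle is precisely the third step: justifying $E[\textbf{d}_p\textbf{n}^H]=0$ rigorously hinges on the joint Gaussianity of signal and noise, so that a single Bussgang gain $\textbf{A}$ governs the cross-correlation of $\textbf{r}_p$ with \emph{both} $\textbf{y}_p$ and $\textbf{n}$; the remaining steps are linear-algebraic bookkeeping.
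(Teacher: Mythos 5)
Your proof is correct and follows essentially the same route as the paper's: derive $E[\textbf{d}_p\textbf{y}_p^H]=0$ from $E[\textbf{y}_p]=0$, expand $\textbf{y}_p=\textbf{H}\textbf{F}^H\textbf{s}_p+\textbf{n}$, kill the noise cross-term, and invert $\textbf{F}\textbf{H}^H$. The only substantive difference is that you actually justify $E[\textbf{d}_p\textbf{n}^H]=0$ via the generalized Bussgang identity for the jointly Gaussian pair $(\textbf{y}_p,\textbf{n})$ (and note the full-rank caveat on $\textbf{H}$), whereas the paper simply asserts that the distortion and channel noise are uncorrelated.
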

\begin{proof}
Taking the expected value of (\ref{rec_chn_est}) yields
\begin{equation}\label{exp_rec}
E[\textbf{y}_p] = 0
\end{equation}
because $E[\textbf{s}_p]=0$ and $E[\textbf{n}]=0$. Substituting (\ref{exp_rec}) in (\ref{exp_d_r}) gives
\begin{equation}\label{dyh}
E[\textbf{d}_p\textbf{y}_p^H] =0.
\end{equation}
Since the quantization distortion and channel noise are uncorrelated, using (\ref{rec_chn_est}) in (\ref{dyh}) trivially implies (\ref{exp_d_s}).
\end{proof}

\begin{theorem}\label{Theorem}
The diagonal matrix $\boldsymbol{\Lambda}$ can be obtained from the one-bit observations and pilots as 
\begin{equation}\label{Thm1}
E[\textbf{F}\mathcal{Q}(\textbf{y}_p)\textbf{s}_p^H]=\sqrt{ \frac{2}{\pi(\sigma_{chn}^2\sigma_{pilots}^2+\sigma_n^2)}}\sigma_{pilots}^2\ \boldsymbol{\Lambda}
\end{equation}
where 
\begin{equation}\label{rho}
\sigma_{pilots}^2=\frac{E[\textbf{s}_p^H\textbf{s}_p]}{N}
\end{equation}
and
\begin{equation}\label{rho}
\sigma_{chn}^2=\frac{tr[\boldsymbol{\Lambda}\boldsymbol{\Lambda}^H]}{N}.
\end{equation}
\end{theorem}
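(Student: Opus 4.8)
The plan is to linearize the quantizer via the Bussgang decomposition (\ref{quant_sig}) and then invoke Lemma \ref{Lemma} to discard the distortion term. Substituting $\mathcal{Q}(\textbf{y}_p) = \textbf{A}\textbf{y}_p + \textbf{d}_p$ into the left-hand side of (\ref{Thm1}) and pulling the deterministic matrices $\textbf{F}$ and $\textbf{A}$ outside the expectation gives
\begin{equation}
E[\textbf{F}\mathcal{Q}(\textbf{y}_p)\textbf{s}_p^H] = \textbf{F}\textbf{A}\,E[\textbf{y}_p\textbf{s}_p^H] + \textbf{F}\,E[\textbf{d}_p\textbf{s}_p^H].
\end{equation}
By Lemma \ref{Lemma} the second term vanishes, so the whole problem reduces to evaluating the cross-correlation $E[\textbf{y}_p\textbf{s}_p^H]$ and pinning down the Bussgang gain matrix $\textbf{A}$.

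For the cross-correlation I would insert the signal model (\ref{rec_chn_est}), use the independence of noise and pilots together with $E[\textbf{n}]=0$ to kill the noise contribution, and take the pilots to be white with $E[\textbf{s}_p\textbf{s}_p^H] = \sigma_{pilots}^2\textbf{I}_N$, consistent with the definition of $\sigma_{pilots}^2$. This yields $E[\textbf{y}_p\textbf{s}_p^H] = \sigma_{pilots}^2\,\textbf{H}\textbf{F}^H$. Substituting the eigendecomposition (\ref{circ}) and using the unitarity $\textbf{F}\textbf{F}^H = \textbf{I}_N$ of the normalized DFT then collapses the expression to $\sigma_{pilots}^2\,\textbf{F}\textbf{A}\textbf{F}^H\boldsymbol{\Lambda}$.

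The crucial step, and the main obstacle, is to show that $\textbf{A}$ is a scalar multiple of the identity, since only then does $\textbf{F}\textbf{A}\textbf{F}^H$ pass through the DFT as a pure scalar and leave $\boldsymbol{\Lambda}$ exposed. For the one-bit quantizer acting on a circularly symmetric Gaussian $\textbf{y}_p$, Bussgang theory produces a diagonal $\textbf{A}$ whose $k$-th entry is $\sqrt{2/\pi}\big/\sqrt{[\textbf{C}_{\textbf{y}_p}]_{k,k}}$, so I must verify that every diagonal entry of the covariance $\textbf{C}_{\textbf{y}_p} = \sigma_{pilots}^2\textbf{H}\textbf{H}^H + \sigma_n^2\textbf{I}_N$ is equal. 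Writing $\textbf{H}\textbf{H}^H = \textbf{F}^H\boldsymbol{\Lambda}\boldsymbol{\Lambda}^H\textbf{F}$ and using $|[\textbf{F}]_{i,k}|^2 = 1/N$ for the normalized DFT, each diagonal entry equals $\tfrac{1}{N}\,tr[\boldsymbol{\Lambda}\boldsymbol{\Lambda}^H] = \sigma_{chn}^2$; hence every time-domain sample carries the common variance $\sigma_{pilots}^2\sigma_{chn}^2 + \sigma_n^2$, and consequently $\textbf{A} = \sqrt{\tfrac{2}{\pi(\sigma_{pilots}^2\sigma_{chn}^2+\sigma_n^2)}}\,\textbf{I}_N$.

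With $\textbf{A}$ scalar we have $\textbf{F}\textbf{A}\textbf{F}^H = \sqrt{\tfrac{2}{\pi(\sigma_{pilots}^2\sigma_{chn}^2+\sigma_n^2)}}\,\textbf{I}_N$, and multiplying by the surviving factor $\sigma_{pilots}^2\boldsymbol{\Lambda}$ reproduces (\ref{Thm1}) exactly. I expect the delicate part to be the careful derivation of the complex one-bit Bussgang gain, where the $\sqrt{2/\pi}$ factor and the $1/\sqrt{2}$ ADC scaling of (\ref{quant}) must combine correctly across the real and imaginary branches, together with the justification that the OFDM time-domain samples are jointly Gaussian and circularly symmetric, which is precisely what legitimizes the Bussgang linearization underlying the entire argument.
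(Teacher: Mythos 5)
Your proposal is correct and follows essentially the same route as the paper's appendix proof: Bussgang linearization of the quantizer, Lemma \ref{Lemma} to eliminate the distortion term, the cross-correlation $E[\textbf{y}_p\textbf{s}_p^H]=\sigma_{pilots}^2\textbf{H}\textbf{F}^H$ under white pilots, and the observation that $\text{diag}(\textbf{C}_{\textbf{y}_p\textbf{y}_p})=(\sigma_{chn}^2\sigma_{pilots}^2+\sigma_n^2)\textbf{I}_N$ forces $\textbf{A}$ to be a scalar multiple of the identity. Your explicit flagging of the whiteness assumption $E[\textbf{s}_p\textbf{s}_p^H]=\sigma_{pilots}^2\textbf{I}_N$ is a point the paper uses implicitly but does not state.
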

\begin{proof}
See Appendix.
\end{proof}

The channel can be estimated perfectly with a very large number of pilots that are sent for each channel coherence time interval with one-bit ADCs if instantaneous channel, pilots and noise powers are known. More precisely, if a large number of pilots are sent to estimate the channel, and each of these pilots is multiplied with the corresponding one-bit observation and the normalized DFT matrix, then taking the average of these terms can produce the $\boldsymbol{\Lambda}$ scaled by a scalar due to Theorem \ref{Theorem}. Since $\boldsymbol{\Lambda}$ is a diagonal matrix whose entries are the channel taps in the frequency domain, estimating $\boldsymbol{\Lambda}$ is equivalent to estimating the channel.

\subsection{Supervised Learning Model}
If there were many pilots in each channel coherence interval in addition to the instantaneous channel, pilots and noise power knowledge, the channel could be estimated perfectly. However, the number of pilots should be minimized to conserve bandwidth and power. It is also not practical to know the instantaneous channel power. Thus, a supervised channel learning model is proposed based on the idea of implementing (\ref{Thm1}) with a DNN. The underlying motivation for investigating a DNN architecture is associated with the generalization capability of DNNs \cite{Goodfellow}, which we will see greatly reduces the number of pilots that are necessary. 

We propose a DNN to estimate the channel as a regression task. The proposed DNN architecture is trained with special labeled data, in particular with the diagonals of the matrix $\textbf{F}\mathcal{Q}(\textbf{y}_p)\textbf{s}_p^H$. That is, the labeled data is produced via the pilot symbols, the corresponding one-bit quantized observations and the DFT matrix. This architecture is given in Fig. \ref{fig:chnLearn}, which is composed of an input layer, $2$ hidden layers and an output layer. Notice that a single hidden layer can give the same performance with two hidden layers if it has a sufficient number of neurons due to the universal function approximation theorem of neural networks \cite{Goodfellow}. However, this brings additional computational complexity, and hence having two hidden layers with reasonable number of neurons seems a good compromise. Fine-tuning our architecture is left to future work. The input layer takes the pilots $\textbf{s}_p$ and produces the corresponding output $\textbf{z}_p$ for $p=1, \cdots, N_t$ where $N_t$ is the total number of pilots transmitted over the channel for one coherence interval. $\textbf{z}_p$ can be written in terms of the trainable weights or network parameters (in matrix notation) and activation functions as
\begin{equation} 
\textbf{z}_{p}=\sigma_3(\boldsymbol{\Theta}_{3}\sigma_2(\boldsymbol{\Theta}_{2}\sigma_1(\boldsymbol{\Theta}_{1} \textbf{s}_{p}))).
\end{equation}
The parameters are optimized according to the following cost function
\begin{equation} \label{cost_func}
J=\min_{\boldsymbol{\Theta}_{1}, \boldsymbol{\Theta}_{2}, \boldsymbol{\Theta}_{3}}\left|\left|{\textbf{z}_p}-\text{diag}(\textbf{F}\mathcal{Q}(\textbf{y}_p)\textbf{s}_p^H)\right|\right|^2
\end{equation}
which are solved with gradient descent via the backpropagation algorithm.

\begin{figure} [!t] 
\centering 
\includegraphics [width=3.5in]{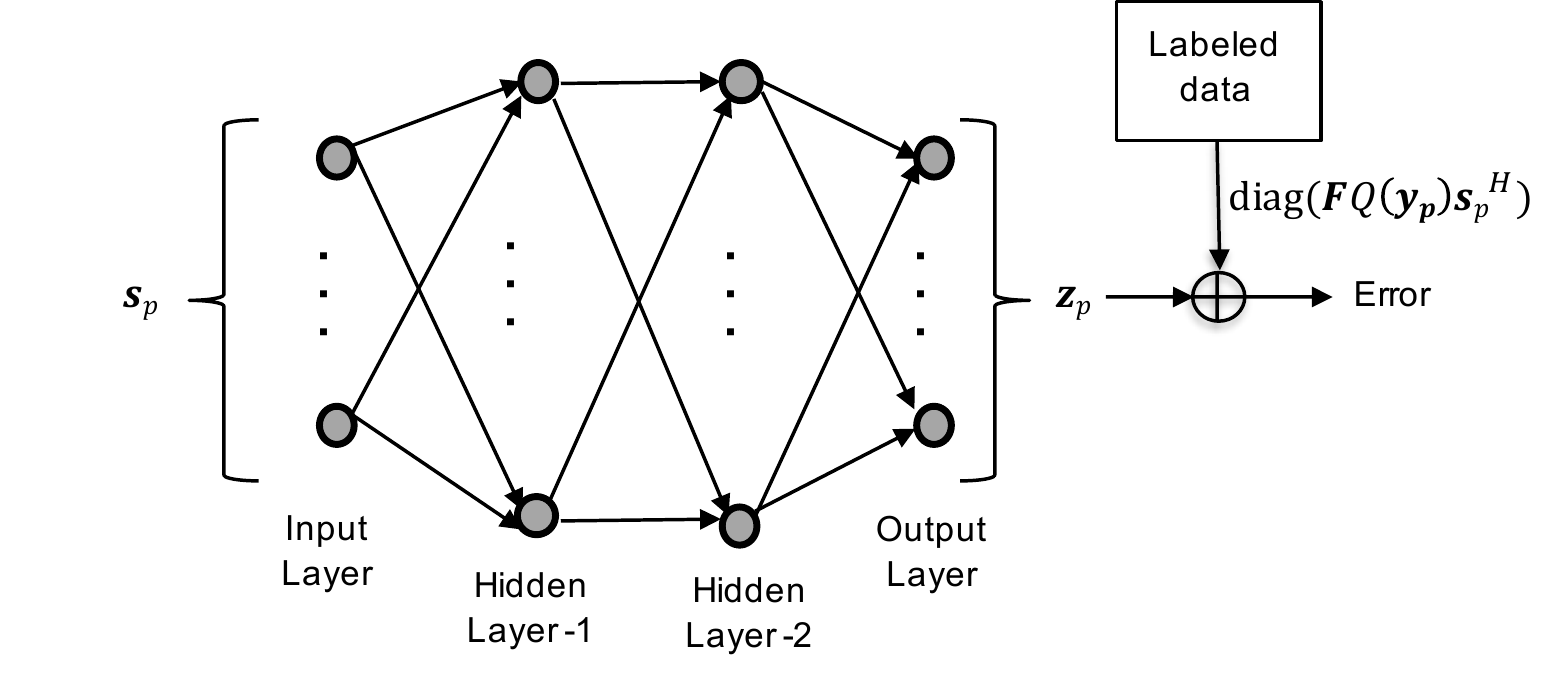}
\caption{The inputs, outputs and labeled data for the proposed DNN. }\label{fig:chnLearn}
\end{figure}

The layers, their types, sizes, activation functions and weights are summarized in Table \ref{tab:SLM}. Since state-of-the-art software libraries that implement neural networks do not support complex operations, the real and imaginary part of the complex vectors are concatenated to obtain a $2N\times1$ real vector. Without loss of generality, the dimension of the hidden layers is taken to be twice that of the input and output layer, giving $32N^2$ trainable parameters, which increases quadratically with the number of subcarriers. Rectified linear unit (ReLU) is used in the hidden layers as an activation function for fast convergence, and a linear activation function is utilized at the output layer, because this is a regression task. The weights between the two layers are specified by matrices.
\begin{table} [!t] 
\renewcommand{\arraystretch}{1.3}
\caption{The proposed DNN architecture for channel estimation with  one-bit ADC}
\label{tab:SLM}
\centering
\begin{tabular}{c|c|c|c|c}
    \hline
       Layer & Type & Size & Activation & Weights\\
    \hline
    \hline
Input Layer & Pilot Symbols & 2N & - & $-$\\
    \hline
Hidden Layer-1 & Fully Connected & 4N & ReLU & $\boldsymbol{\Theta}_{1}$\\
    \hline
 Hidden Layer-2 & Fully Connected & 4N & ReLU & $\boldsymbol{\Theta}_{2}$\\
    \hline
 Output & Fully Connected & 2N & Linear & $\boldsymbol{\Theta}_{3}$\\
		\hline
\end{tabular}
\end{table}

The DNN is trained to minimize the MSE between the outputs and the labeled data as given in (\ref{cost_func}). This implies that the learned probability distribution of the output can approximate the probability distribution of $\text{diag}(\textbf{F}\mathcal{Q}(\textbf{y}_p)\textbf{s}_p^H)$. Once the model is trained, we generate as many output samples as needed from the learned distribution in response to random inputs within the same channel coherence interval, and take their average to estimate the channel in accordance with (\ref{Thm1}). The generated output samples for the random inputs do not cost anything other than some extra processing, because these inputs are not coming from the channel; rather they are generated randomly in the receiver. Note that many different types of generative model applications emerge after the seminal paper of \cite{GANs} proposed to train a generative model in the framework of a generative adversarial network (GAN). To be more precise, our trained DNN generates some output samples $\textbf{z}_i$ in response to the random inputs $\textbf{s}_i$. In what follows, the channel taps in the frequency domain are estimated as
\begin{equation} \label{est_chn_taps}
\hat{H} = \frac{1}{M}\sum_{i=0}^{M-1}\textbf{z}_i
\end{equation}
where $\hat{H}=[\hat{H}_0 \cdots \hat{H}_{N-1}]$. Note that $M$ is the total number of arbitrarily generated output samples. There is no constraint to limit $M$ except the processing complexity, i.e., the $\textbf{z}_i$ does not consume any bandwidth. Note that at each time the channel changes, the model must be retrained with $N_t$ pilots, and $M$ randomly generated samples after training the DNN with the pilots.

The overall computational complexity of the proposed channel estimation model is composed of training the DNN model and generating random samples from the trained DNN. The former leads to the complexity of $\mathcal{O}(W^2)$ where $W=32N^2$ is the total number of adaptive parameters in the DNN, which stems from the backpropagation algorithm. The latter phase has relatively less complexity, in particular its complexity comes from matrix-vector multiplication. Hence, the proposed learning model for channel estimation has a complexity of $\mathcal{O}(W^2)$.

\section{Data Detection with One-Bit ADCs}
\label{Data Detection with One-Bit ADCs}
Reliably detecting the OFDM symbols with one-bit ADCs is extremely difficult even if channel is estimated and equalized perfectly because of the resulting severe ICI. The ICI results because quantization in the time domain disrupts the orthogonality between the subcarriers in the frequency domain. For example, consider QPSK modulated OFDM symbols transmitted over a $10$-tap frequency selective channel at $20$ dB SNR. This yields the constellation diagram given in Fig. \ref{fig:scatter_unquant_cnst} and Fig. \ref{fig:scatter_one_bit_cnst} for the unquantized and one-bit quantized received samples assuming that the channel is perfectly estimated and equalized for both cases. It does not seem possible to reliably detect these QPSK symbols with one-bit ADCs. 

\begin{figure}[!h]
\centering
\subfigure[Unquantized OFDM]{
\label{fig:scatter_unquant_cnst}
\includegraphics[width=1.5in]{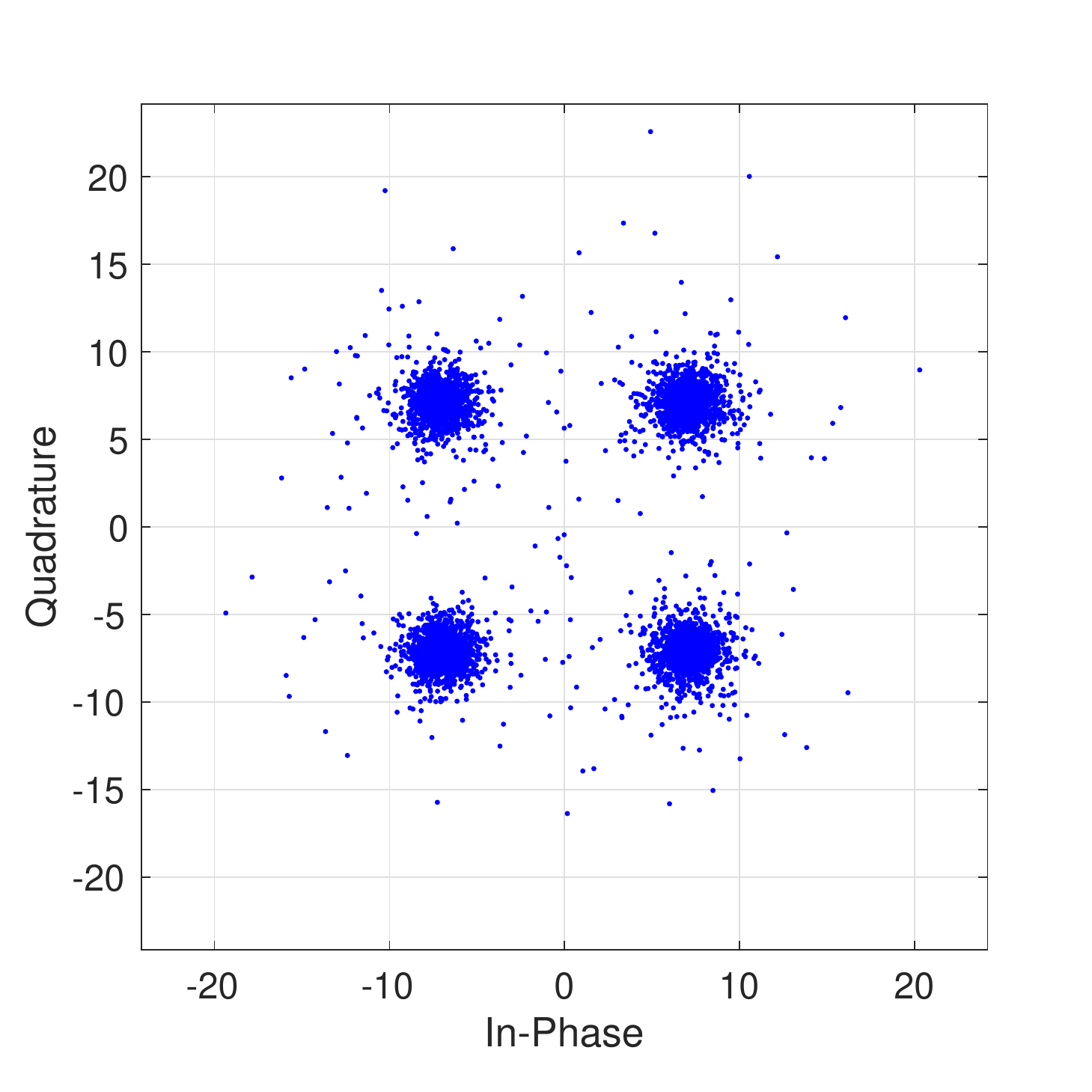}}
\qquad
\subfigure[OFDM with one-bit ADCs]{
\label{fig:scatter_one_bit_cnst}
\includegraphics[width=1.5in]{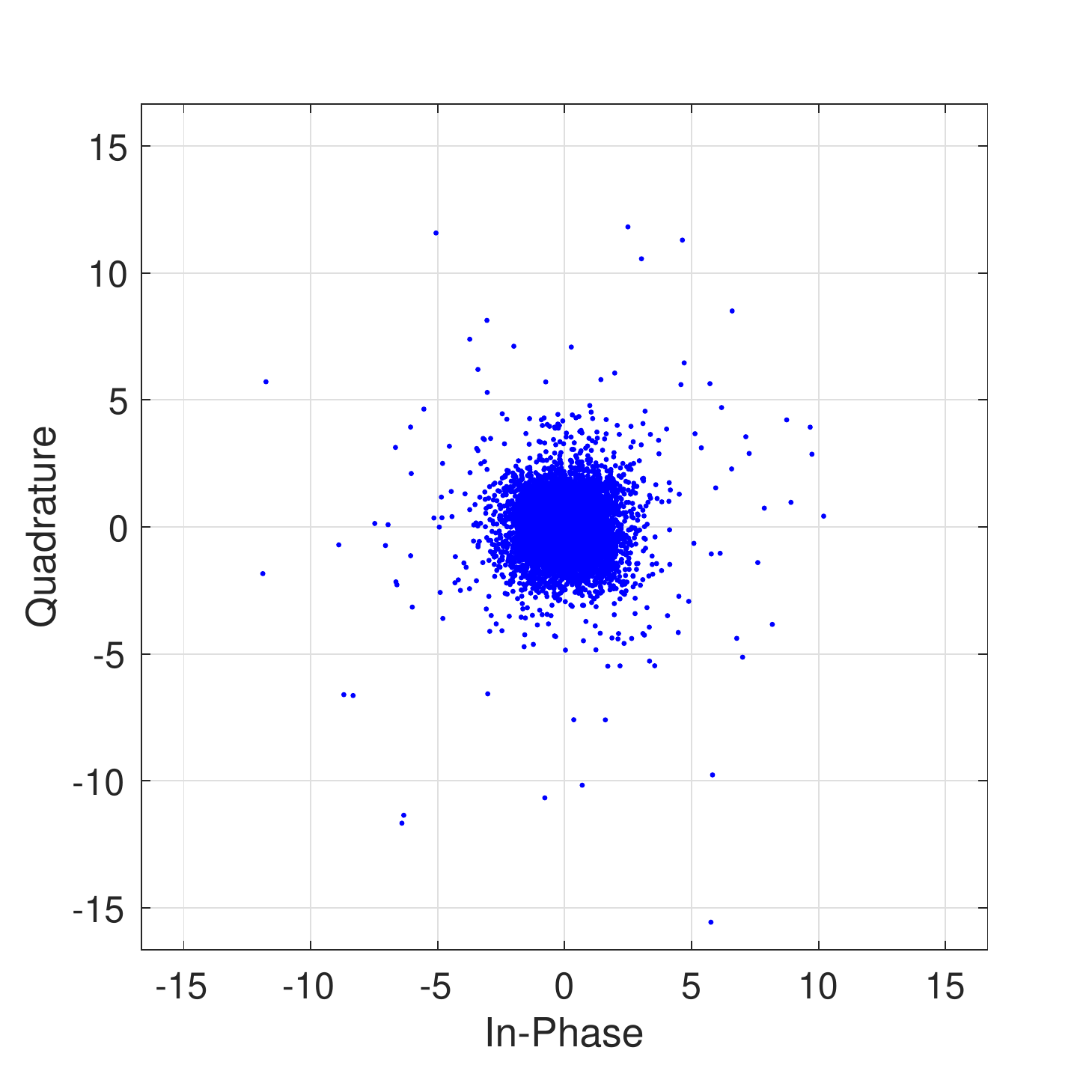}}
\caption{Constellation diagram of the QPSK modulated OFDM symbols received at $20$ dB SNR for the (a) ideal unquantized case (b) one-bit quantization applied separately for the in-phase and quadrature terms.}
\end{figure}

To have a satisfactory error rate for the detection of OFDM symbols with one-bit ADCs, we propose to jointly learn a precoder and decoder. This can be done by adapting an autoencoder, which is a powerful unsupervised deep learning tool, to the OFDM system. More precisely, the end-to-end OFDM communication system is treated as a single autoencoder to jointly learn a precoder and decoder. The main challenge related with this approach surfaces in training. Specifically, one-bit ADCs lead to a non-differentiable layer in the autoencoder, which hinders the training of the parameters. This issue is handled via a novel two-step sequential training policy. The practical challenges and our implementation suggestions for the aforementioned model are given at the end of this section.

\subsection{Autoencoder Based OFDM}\label{Autoencoder}
An autoencoder aims to copy its inputs to the outputs by decreasing the signal dimension in the hidden layers so as to enforce a sparse representation of the input \cite{Goodfellow}. By this it is meant that autoencoders can reconstruct the output from a low-dimensional representation of the input\footnote{Here, the low dimension refers to the low resolution data not the number of neurons in the hidden layers.} at some hidden layer by learning an encoder and decoder. This is a good match for our problem, in which the transmitted OFDM symbols are detected using the one-bit quantized observations with the help of a precoder and decoder. Here the analogy is that the OFDM symbols correspond to the inputs, the one-bit quantized data is a hidden layer, and the outputs represent the detected symbols.

\begin{figure*} [!t] 
\centering 
\includegraphics [width=6in]{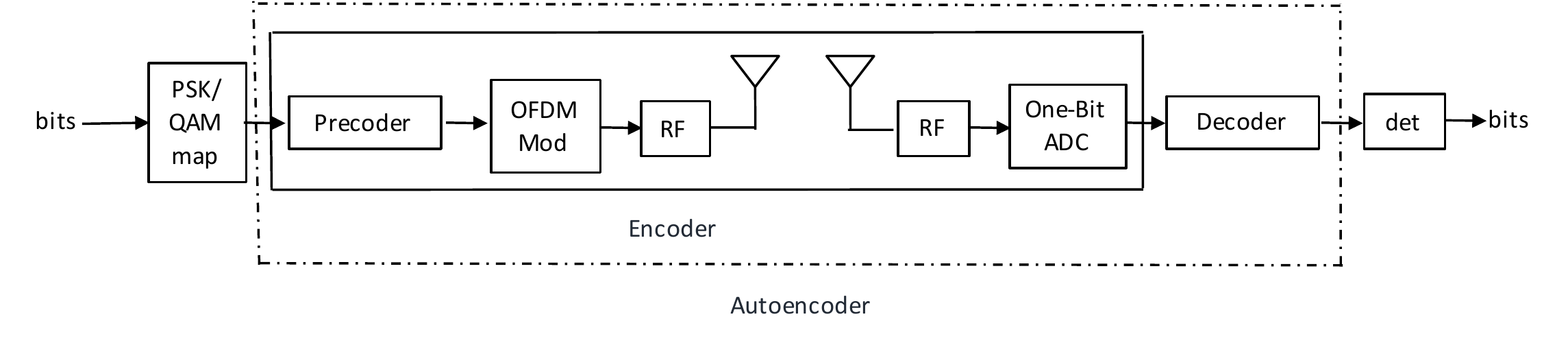}
\caption{AE-OFDM: Adapting an autoencoder for OFDM systems with one-bit ADCs.}\label{fig:AEmodel}
\end{figure*}
To make use of an autoencoder for one-bit OFDM detection, the main building blocks, which are the encoder and decoder, have to be adapted. Therefore, the learned precoder \textbf{P}, OFDM modulator $\textbf{F}^H$ (which is realized as an IDFT), channel \textbf{H}, noise and quantizer can be seen collectively as an encoder. The decoder corresponds to the post-processing after quantization at the receiver. This model is shown in Fig. \ref{fig:AEmodel}, and termed as AE-OFDM, which is consistent with state-of-the-art OFDM transceivers except the precoder and decoder are now implemented as artificial neural networks. 

In AE-OFDM, the modulated symbols at the $N$ subcarriers, i.e., $s_k$ for $k=0,1,\cdots,N-1$ are multiplied with a linear precoder matrix in the frequency domain, which will be learned through training. This leads to
\begin{equation}
{\textbf{x}}=\textbf{P}\textbf{s}
\end{equation}
where $\textbf{P} \in \mathbb{C}^{N\times N}$ is the frequency domain precoder matrix, and $\textbf{s}=[s_0 s_1 \cdots s_{N-1}]^T$. Crucially, the pilot symbols are not multiplied by a precoder matrix in channel estimation, since the precoder is designed according to the channel, i.e., after channel estimation. In what follows, an IDFT is applied to the precoded symbols, and transmitted over a dispersive channel that has $L$ time domain taps such that $L<N$. This results in
\begin{equation} \label{RecSigAE}
\textbf{y}= \textbf{H}\textbf{F}^H\textbf{x}+\textbf{n}
\end{equation}
which is similar to (\ref{rec_chn_est}) except the pilot symbols $\textbf{s}_p$ are replaced with $\textbf{x}$. How the channel taps can be estimated via deep learning was given in (\ref{est_chn_taps}). 

One-bit quantization of (\ref{RecSigAE}) with a pair of ADCs for the in-phase and quadrature components provides the input to the decoder 
\begin{equation} 
\textbf{r}=\mathcal{Q}(\textbf{y})
\end{equation}
such that $\mathcal{Q}(\cdot)$ is applied element-wise. The decoder \textbf{D} is a multi-layer neural network whose aim is to reconstruct \textbf{s} from \textbf{r}. Specifically,
\begin{equation} 
\textbf{s}' =\sigma_Z(\boldsymbol{W}_{Z}\cdots\sigma_2(\boldsymbol{W}_{2}\sigma_1(\boldsymbol{W}_{1} \textbf{r})))
\end{equation}
where $Z$ is the number of layers and $\sigma_z$ is the activation function for layer $z$ applied element-wise for vectors. The dimension of the parameter matrices is 
\begin{align} \label{rewardFunc}
    \dim(\boldsymbol{W}_{z}) = 
    \begin{cases}
       l_z \times \dim(\textbf{y}), &  z=1\\
       l_z \times l_{z-1}, &  z=2,\cdots, Z-1 \\
       N \times l_{Z-1} &  z=Z
    \end{cases} 
\end{align}

In summary, the end-to-end AE-OFDM architecture from the transmitter to the receiver can be divided into logical blocks as depicted in Fig. \ref{fig:genDL}. Here, the modulated symbols are treated as an input layer and the detected symbols constitute the output layer. AE-OFDM eliminates the need at the receiver for an explicit DFT and equalization, because they are implicitly learned. The next step is to learn the neural network precoder $\textbf{P}$ and decoder $\textbf{D}$ by properly training the model. 
\begin{figure} [!t] 
\centering 
\includegraphics [width=3.5in]{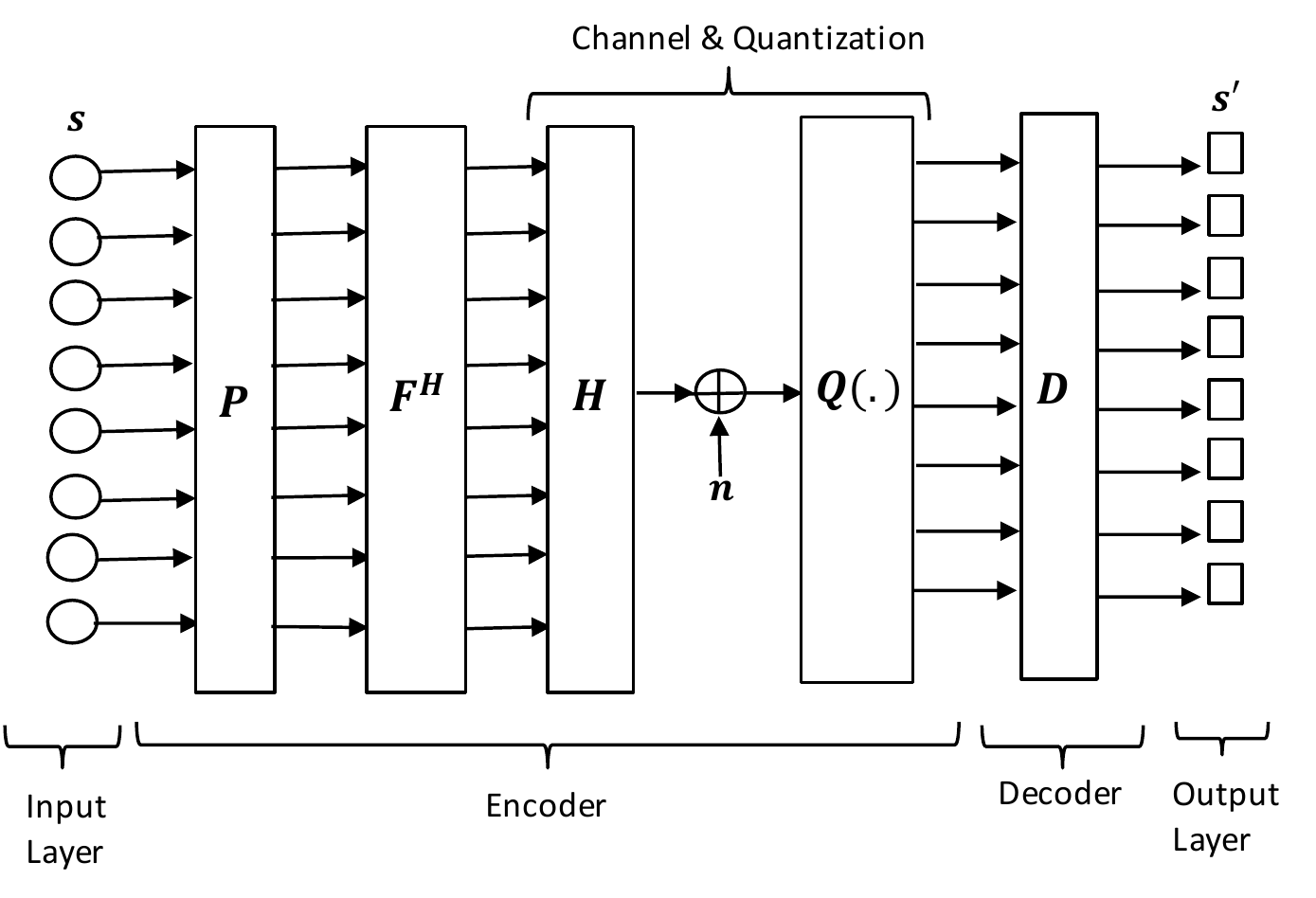}
\caption{The block diagram of the AE-OFDM architecture with one-bit quantization.}\label{fig:genDL}
\end{figure}

\subsection{Training}\label{Training}
Autoencoders are trained to minimize the reconstruction loss or the sum-of-squares error function between the input and output layer, which corresponds to 
\begin{equation} \label{recloss}
e= ||\textbf{s}-\textbf{s}'||^2
\end{equation}
where $\textbf{s}'=[s'_0 s'_1 \cdots s'_{N-1}]^T$. The parameters of the neural layers in the precoder and decoder are trained according to this error function as
\begin{equation} 
W_{k,l}^{(n+1)}=W_{k,l}^{(n)}-\mu\frac{\partial e}{\partial W_{k,l}^{(n)}},
\end{equation}
where $W_{k,l}^{(n)}$ indicates the $l^{th}$ neuron at the $k^{th}$ layer in the $n^{th}$ iteration and $\mu$ is the learning rate. The gradient of the error is evaluated using a local message passing scheme among layers known as backpropagation. However, the quantization layers or $\mathcal{Q}(\cdot)$ stymies the backpropagation, because its derivative is $0$ everywhere except that the point at $0$ that is not even differentiable. Thus, any neural layer before $\mathcal{Q}(\cdot)$, which corresponds to the precoder, cannot be trained. Hence, a novel training policy is needed for the AE-OFDM model.

In this paper, a two-step sequential learning model is proposed to train the AE-OFDM instead of end-to-end training. In the first step, the decoder is trained without explicitly considering the channel and OFDM modulator. In the second step, the precoder is learned to be compatible with the trained decoder taking into account the channel and OFDM modulator. An apparent advantage of this training policy lies in the fact that the decoder can be trained offline, which brings significant complexity savings. On the other hand, the precoder has to be learned online at each time the channel changes. This can be done with a reasonable pre-determined number of training samples with a small size neural network following the channel estimation. In particular, both the decoder and precoder are trained with $5000$ samples for an OFDM system that has $64$ subcarriers. Note that this does not mean that $5000$ pilots symbols are sent over the channel, as will be explained in the next section.

The overall end-to-end model for the two-step sequential training policy including all layers from $l_1$ to $l_8$ is given in Fig. \ref{fig:E2E}. 
\begin{figure} [!t] 
\centering 
\includegraphics [width=3.5in, angle = 0]{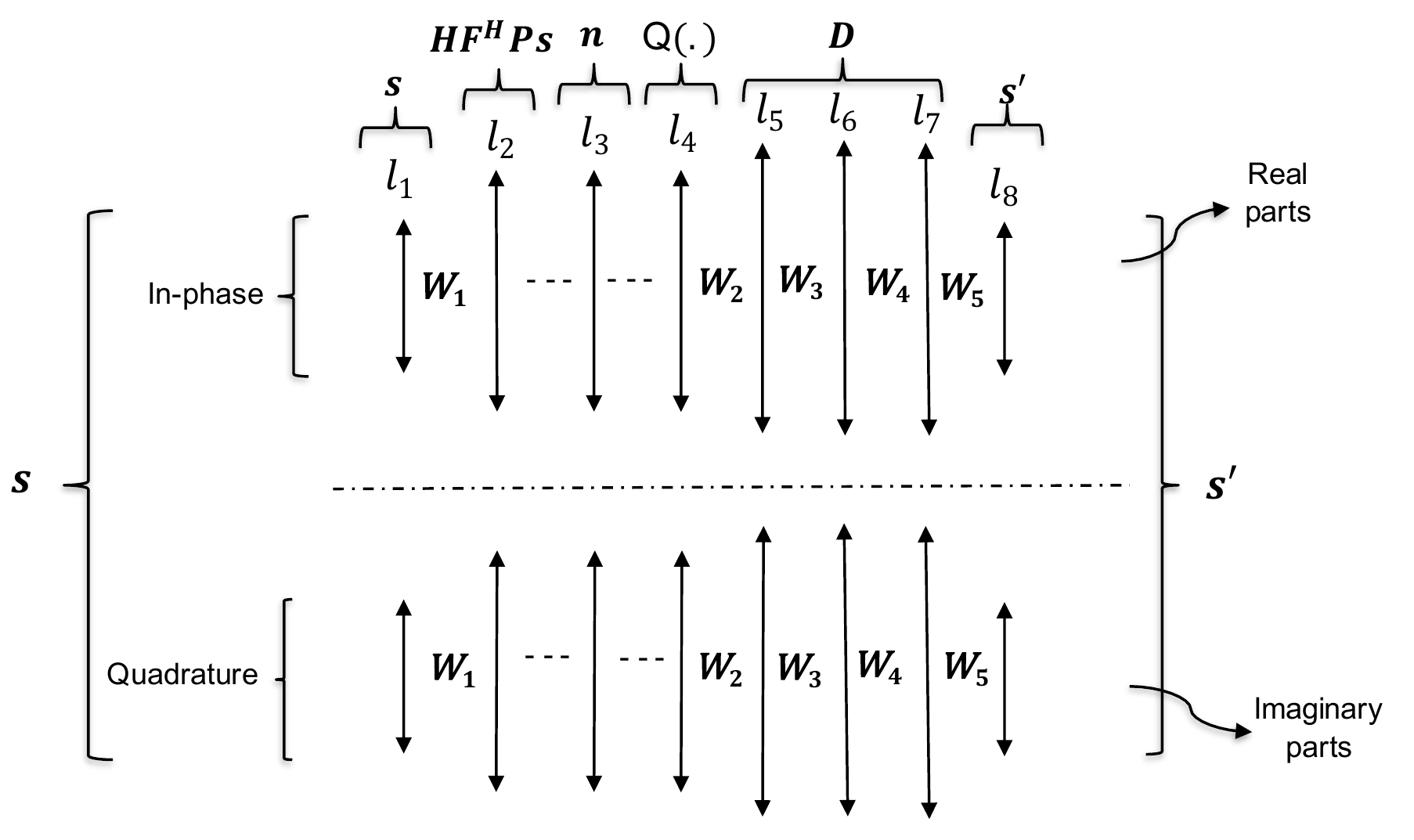}
\caption{The end-to-end layered architecture of the AE-OFDM.}\label{fig:E2E}
\end{figure}
Notice that each layer is composed of two parts. More precisely, $l_i$ for $i=1,\cdots, 8$ can be considered as a single vector, wherein the first half corresponds to the in-phase components of the symbols or the real part of the complex baseband signal denoted as $l_{iI}$, and the second half represents the quadrature or imaginary part represented by $l_{iQ}$. Accordingly, the symbols are first multiplied with the learned precoder matrix. The precoded symbols are normalized to ensure the average transmission power constraint, and then multiplied with the IDFT and channel matrix, respectively, which forms the $l_2$ layer. To obtain $l_3$, AWGN is added both for the real and imaginary parts, and the resultant samples are quantized in $l_4$. Lastly, the decoder processes the data via $l_5,l_6,l_7$, and outputs are obtained at $l_8$. The decoded symbols are mapped to the closest constellation point according to the minimum Euclidean distance criterion. 

For the decoder, the complex baseband signal can be easily divided into real and imaginary parts, each of which is processed separately with the same set of parameters. That is, there is parameter sharing, which is one of the key concepts behind the success of deep learning. The rationale behind parameter sharing is to decrease the complexity. To be specific, $\boldsymbol{W}_1, \boldsymbol{W}_2, \boldsymbol{W}_3, \boldsymbol{W}_4, \boldsymbol{W}_5$ demonstrate the shared parameters. Although the received complex OFDM baseband signal can be trivially broken into real and imaginary parts for decoder, it is not straightforward to divide the signal at the transmitter. This is associated with the OFDM modulation that mixes the in-phase and quadrature parts of the modulated symbols via the IDFT. This challenge is inherently handled while training the precoder, in which we implement a simple supervised learning model by using the $l_2$ layer of the decoder as a labeled data set for the input $l_1$. It is worth emphasizing that although the overall autoencoder architecture is an unsupervised learning model, supervised learning is used within this autoencoder so as to train the encoder part in case of one-bit quantization. As noted earlier, the one-bit quantizers prevent end-to-end training.

A supervised learning model is presented in Fig. \ref{fig:preLearn} to train the precoder associated with the trained decoder.  
\begin{figure} [!t] 
\centering 
\includegraphics [width=3.5in, angle = 0]{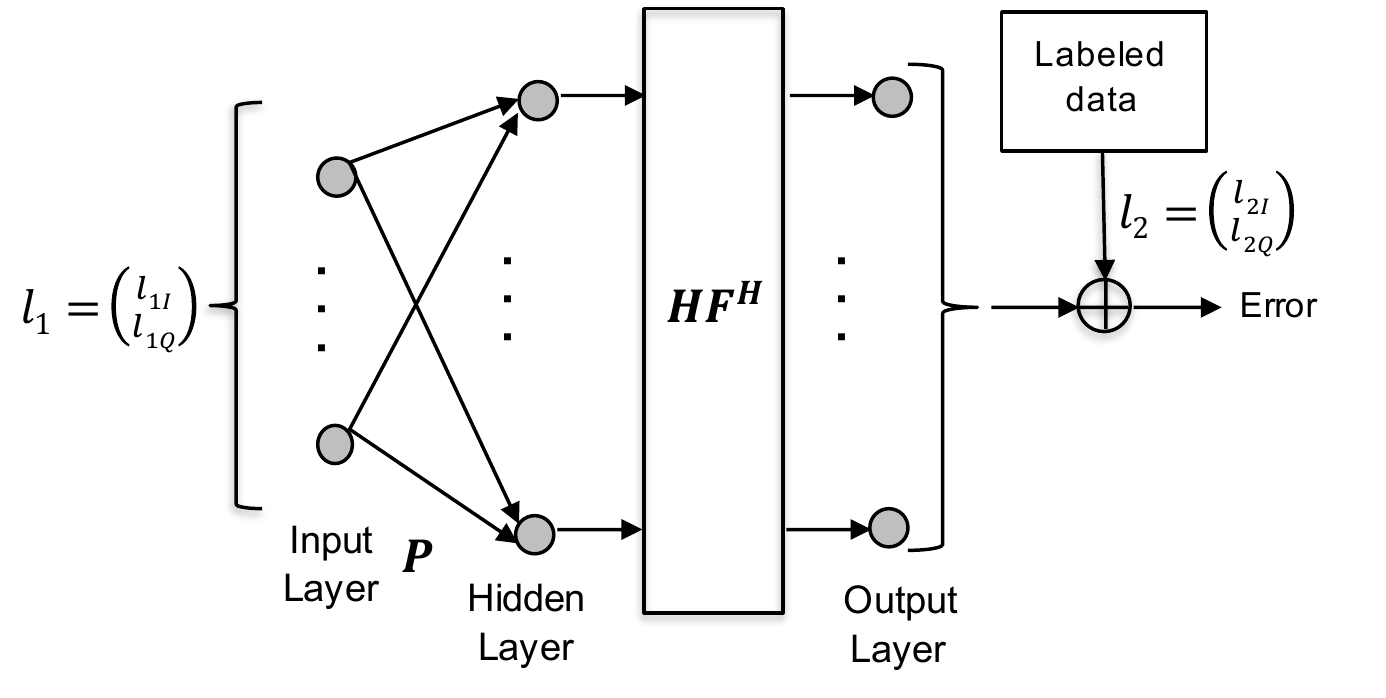}
\caption{A supervised learning model for the precoder that uses the $l_1$ and $l_2$ layers of the decoder in the training phase.}\label{fig:preLearn}
\end{figure}
During the training of decoder, the real and imaginary parts of $l_1$ and $l_2$ are concatenated to obtain a real vector with dimension $2N$. Then, these values of $l_1$ and $l_2$ layers are stored to create a data set to train the precoder such that $l_1$ constitutes the input data, and $l_2$ is used for labeled data. In this model, the inputs are processed with a neural layer, which corresponds to the precoder. Then, the precoded symbols are transformed to another vector by multiplying it with $\textbf{H}\textbf{F}^H$. Since the precoder is trained after estimating the channel, $\textbf{H}$ is already known. The primary aim of this model is to learn the output samples with respect to the labeled data set through the learned precoder. Theoretically, the labeled data set can be very well approximated with the outputs, because in this case there are no factors that limit the learning such as noise, data impediments, or dimension reduction. 

The further details of the AE-OFDM architecture including the layer types, sizes, activation functions and trainable weights of layers are illustrated in Table \ref{tab:AE-OFDM} considering the in-phase and quadrature parts separately. The layers before quantization can have a higher number of neurons than the input to make the learning more efficient, i.e., their size is $GN$ such that $G\geq 1$. This can be achieved with oversampling. Similarly, the decoder layers have a high dimension as $KN$, in which $K$ is taken $20$ without any loss of generality. Note that our empirical observations demonstrate that the value of $G$ affects the performance much more than $K$. Hence, the results are obtained for different values of $G=\{1, 2, 4\}$. At the output, a linear activation function is used, and thus a continuous valued vector with $N$ terms is obtained. Each term of this vector is individually mapped to one of the constellation points according to the minimum Euclidean distance criterion. This greatly reduces the dimension of the output when compared to using softmax activation function at the output in conjunction with a one-hot encoding, since this requires a $2^N$ dimensional output vector.

\begin{table*} [!t] 
\renewcommand{\arraystretch}{1.3}
\caption{AE-OFDM Model and Layers}
\label{tab:AE-OFDM}
\centering
\begin{tabular}{c|c|c|c|c|c}
    \hline
     & Layer & Layer Type & Size & Activation & Weights \\
    \hline
    \hline
    $l_{1I}$ - $l_{1Q}$ &Input & Input Symbols & $N$ & - &-\\
    \hline
         \multirow{3}{*}{$l_{2I}$ - $l_{2Q}$} & Precoder &  &  &  &\\
    
	 & IDFT & Fully Connected & $GN$ & Linear & $\boldsymbol{W}_1$ \\
    
 	 & Channel &  &  &  &  \\
    \hline
     $l_{3I}$ - $l_{3Q}$ &Noise & Vector & $GN$ & - \\
    \hline
     $l_{4I}$ - $l_{4Q}$ &Quant & Sign Function & $GN$ & - \\
    \hline
     $l_{5I}$ - $l_{5Q}$   & \multirow{3}{*}{Decoder}  & Fully Connected & $KN$ & ReLU &$\boldsymbol{W}_2$\\
    
	$l_{6I}$ - $l_{6Q}$ &  & Fully Connected & $KN$ & ReLU &$\boldsymbol{W}_3$ \\
    
	$l_{7I}$ - $l_{7Q}$ &  & Fully Connected & $KN$ & ReLU&$\boldsymbol{W}_4$ \\
    \hline
		$l_{8I}$ - $l_{8Q}$ &  Output & Fully Connected & $N$ & Linear &$\boldsymbol{W}_5$\\
		\hline
\end{tabular}
\end{table*}
 
The computational complexity of the proposed learning model for data detection is $\mathcal{O}(W_d^2)$ + $\mathcal{O}(W_e^2)$, where $W_d$ is the number of adaptive parameters in the decoder and $W_e$ is the number of adaptive parameters in the encoder. Since we employ a two-step sequential training policy, the decoder parameters are trained in the first step and the parameters of the encoder are trained in the following step according to the trained decoder. This is illustrated in Fig. \ref{fig:preLearn}.
 
\subsection{Practical Challenges} \label{Challenges}
The channel varies according to the block fading model so the precoder parameters have to be retrained each time the channel changes. This can bring excessive training symbol overhead. As a worst case assumption, each OFDM symbol could be required to occur at least once in the training phase.  This would require at least $2^N$ pilots, which in our case would be greater than $10^{19}$.  This shows the generalization capability of the DNN that will be trained with just $5000$ symbols in this paper for $N=64$. However, this number of pilots is still impractical in terms of bandwidth efficiency. To address this problem, one solution can be to train the precoder at the receiver after estimating the channel. Then, the learned precoder can be notified to the transmitter before data transmission begins. This brings the flexibility of training the model with as many samples as needed without decreasing the bandwidth efficiency due to additional pilots. The main drawback of this training model can be some extra processing at the receiver. However, this complexity can be handled using stochastic computing-based hardware implementations \cite{Kim}. Another solution can be to train the precoder parameters at the transmitter after the receiver sends the channel information to the transmitter. This can be especially useful in downlink communication. With this approach, there is no need to do training in the receiver, since the precoder is trained in the transmitter and the decoder parameters are trained offline.

\subsection{Implementation} \label{Implementation}
The layered model gives an abstract view of AE-OFDM, which means that it can be implemented in many different ways in practical transceivers, in particular, depending on how the input dimension is increased when $G>1$. Adding redundant subcarriers, employing multiple antennas, oversampling in time and/or in frequency domain are methods to increase the input dimension. In this paper, our focus is on oversampling methods, wherein $G$ is treated as the oversampling factor so that AE-OFDM can be realized by either time domain oversampling or frequency domain oversampling, which we now discuss in turn.

\subsubsection{Time Domain Oversampling} 
The discrete-time received signal can be written as  
\begin{equation}\label{signaltime}
y_n=\sum_{l=0}^{L-1}h_lx_{n-l}+n_n
\end{equation}
where $h_l$ is the channel taps in the time domain, $n_n$ is the complex Gaussian noise as $CN(0,\sigma_n^2)$, and
\begin{equation}\label{precodedSymbols}
x_n = \frac{1}{\sqrt{N}}\sum_{k=0}^{N-1}X_ke^{j2\pi kn/N}
\end{equation}
in which $X_k$ is the precoded symbol in the frequency domain.

The received continuous-time complex signal can be expressed analogous to (\ref{signaltime}) as 
\begin{equation}
y(t) = \frac{1}{\sqrt{N}}\sum_{k=0}^{N-1}H_kX_ke^{j2\pi kt/T}+n(t)
\end{equation}
where $T$ is the OFDM symbol period, and 
\begin{equation}
H_k=\sum_{l=0}^{L-1}h_le^{-j2\pi kl/N}.
\end{equation}
This signal is sampled at time instances $t = nT_s+ gT_s/G$ where $T_s=T/N$ and $g=0,1,\cdots,G-1$, which produces
\begin{equation}\label{yng}
y_{n_g} = y(nT_s+ gT_s/G). 
\end{equation}
Generalizing (\ref{yng}) to the matrix-form leads to
\begin{equation}
\textbf{y} = \textbf{H}_{\rm tos}\textbf{F}^H\textbf{P}_{\rm tos}\textbf{s}+\textbf{n}
\end{equation}
where $\textbf{y} = [\textbf{y}_0  \textbf{y}_1 \cdots \textbf{y}_{G-1}]^T$ such that $\textbf{y}_g = [y_{g_0}  y_{g_1} \cdots y_{g_{N-1}}]^T$, and
\[
\textbf{H}_{\rm tos} = 
\left[ \begin{array}{c}
   \textbf{F}^H\textbf{E}_0\textbf{F}\textbf{H}_{\rm srs}  \\
   \vdots  \\
   \textbf{F}^H\textbf{E}_{G-1}\textbf{F}\textbf{H}_{\rm srs} \\
\end{array}\right]
\] 
where
\begin{equation}
\textbf{E}_g = \text{diag}(1, e^{j2\pi g/GN}, e^{j4\pi g/GN}, \cdots, e^{j2\pi (N-1)g/GN}).
\end{equation}
Hence, the oversampled channel matrix $\textbf{H}_{\rm tos}$ can be written in terms of the symbol rate sampled channel matrix $\textbf{H}_{\rm srs}$, which becomes $GN\times N$, where $G$ shows the time domain oversampling factor. In this case, the precoder matrix $\textbf{P}_{\rm tos}$ remains complex $N\times N$ matrix as $\textbf{F}^H$. Note that $\textbf{P}_{\rm tos}$ can be learned according to $\textbf{H}_{\rm tos}$, and this results in   
\begin{equation}
s_{2}=\textbf{H}_{\rm tos}\textbf{F}^H\textbf{P}_{\rm tos}s_{1} 
\end{equation}
where $s_{1}=l_{1I}+jl_{1Q}$ and $s_{2}=l_{2I}+jl_{2Q}$. In what follows, the real and imaginary parts of $s_2$ are concatenated to obtain the real vector $l_2$ that is used by the decoder to detect the transmitted symbols.

\subsubsection{Frequency Domain Oversampling} 
Zeros are padded at the transmitter before IDFT to realize frequency domain oversampling. This obviously increases the block size of the IDFT by a factor $G$. In this case, the precoder matrix is found according to the frequency domain oversampled channel, and this produces
\begin{equation}
s_{2}=\textbf{H}_{\rm fos}\textbf{F}_{\rm fos}^H\Gamma\textbf{P}_{\rm fos}s_{1} 
\end{equation}
where $\textbf{P}_{\rm fos}$ is a $N\times N$ matrix, and
\[
\Gamma= 
\left[ \begin{array}{c}
   \textbf{I}_{N\times N}  \\
   \textbf{0}_{(G-1)N\times N}   \\
\end{array}\right].
\] 
Further, $\textbf{H}_{\rm fos}$ and $\textbf{F}_{\rm fos}$ are $GN\times GN$ matrices. 

AE-OFDM can also be implemented as a combination of time and frequency domain oversampling, and  the precoder matrix can be found accordingly. In this case
\begin{equation}
G=G_tG_f 
\end{equation}
where $G_t$ and $G_f$ denote the oversampling factor in time and frequency domain, and the matrix representations can be obtained trivially via the derived expressions. To summarize, AE-OFDM can be implemented in many different ways, and this choice depends on the requirements of communication schemes. For example, if AE-OFDM operates in the sub $6$-GHz with moderate bandwidth, time domain oversampling can be done without increasing the power consumption much due to the increased sampling rate \cite{Walden}. On the other hand, frequency domain oversampling can be preferred for mmWave transmissions that provides large bandwidth, in which the high sampling rate can be too costly regarding the power consumption at the expense of implementing longer IDFT and DFT. 

\section{Simulations} \label{Simulations}
The proposed generative supervised deep learning model for channel estimation, and unsupervised autoencoder model for data detection are evaluated using tensors to make use of TensorFlow framework while implementing neural layers. Note that a tensor can be viewed as $n$-dimensional arrays involving matrices or vectors, in which TensorFlow can run computations over them. The efficiency of the proposed models are assessed by generating a synthetic data for the transmitted symbols, wireless channel and noise. It is assumed that transmitted symbols are QPSK modulated, wireless channel taps are complex Gaussian, and they have uniform power delay profile. Noise samples are additive white Gaussian random variables. There are $64$ subcarriers in one OFDM block, i.e., $N=64$. This is consistent with IEEE 802.11a/g/n/ac, and could also be reasonable for the LTE downlink, since a given UE is often allocated one or two resource block groups, which are each $36$ subcarriers (for a 10 MHz bandwidth). For the models, the performance metric for channel estimation is MSE, and it is BER for data detection. 

\subsection{Channel Estimation}
The DNN model for the channel estimation given in Table \ref{tab:SLM} is trained with $3$ different number of training symbols or pilots transmitted over the channel as $N_t=\{10, 20, 25\}$ to determine the sufficient number of training symbols. In training, gradient descent is used with an adaptive learning rate, wherein gradients are found with backpropagation algorithm, and Adam optimizer is employed to have an adaptive learning rate whose initial learning rate is $0.01$ \cite{Adam}. Once the DNN is trained according to this setting, $M=10,000$ randomly generated input samples are input to the DNN, and their corresponding $10,000$ outputs are averaged to estimate the channel taps in the frequency domain. Note that our empirical results show that $M$ can be much less than $10,000$ provided there are sufficient number of pilots. This simulation is repeated for $100$ different channel realizations. Then, its performance is compared with the state-of-the art LS channel estimation for unquantized OFDM samples and one-bit quantized OFDM samples. The proposed generative deep learning-based model is also compared with the optimum maximum likelihood channel estimation for unquantized samples. Comparing the performance with an unquantized maximum likelihood channel estimation shows how efficiently the proposed model can cope with the detrimental effects of quantization. In the ideal case, a generative deep learning model can perfectly estimate the unquantized samples from the quantized samples, which can achieve the maximum likelihood channel estimation performance.

\begin{figure*}[!t]
\centering
\subfigure[For $3$ complex Gaussian channel taps]{
\label{fig:threetaps}
\includegraphics[width=3in]{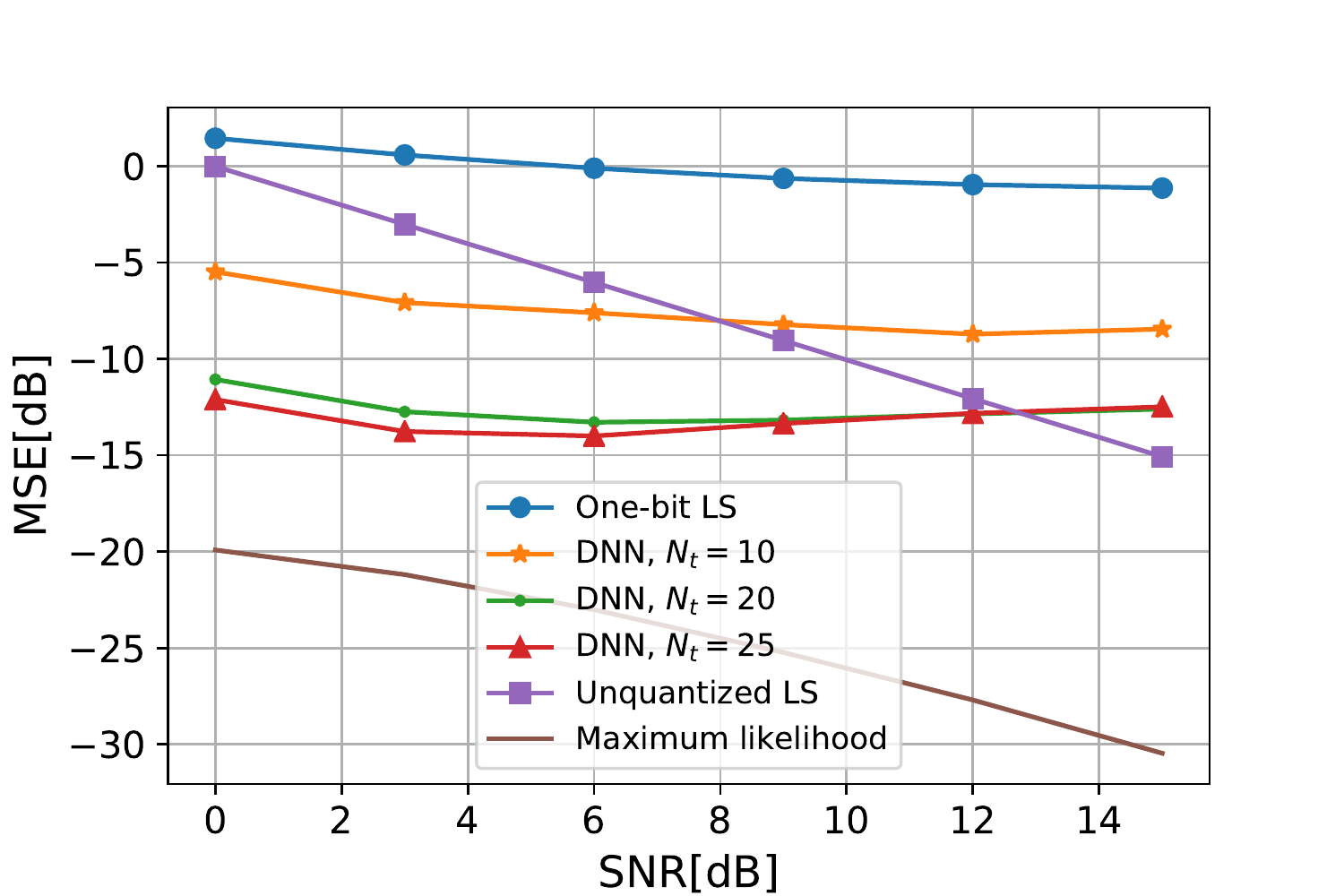}}
\qquad
\subfigure[For $10$ complex Gaussian channel taps]{
\label{fig:tentaps}
\includegraphics[width=3in]{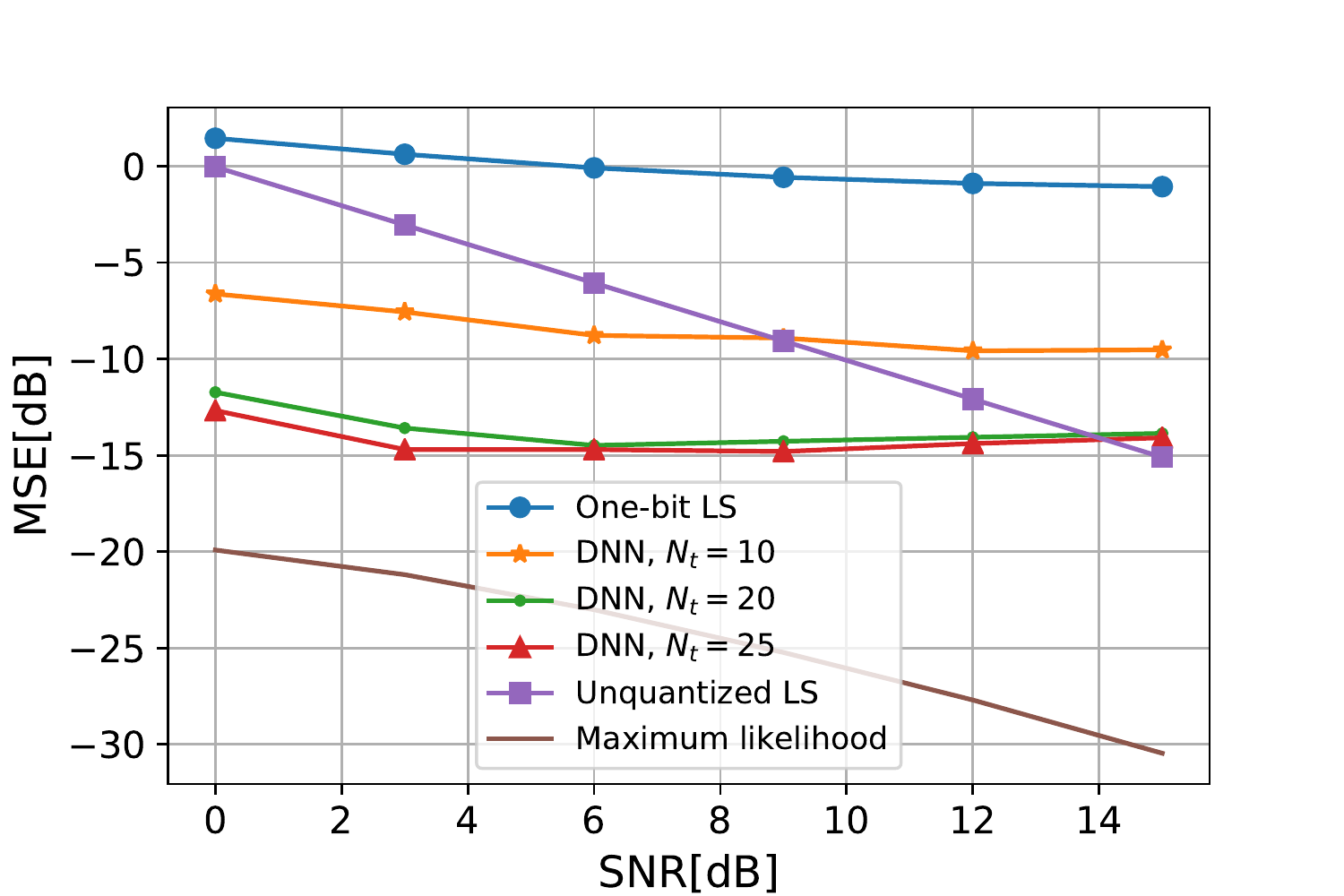}}
\caption{The MSE of the proposed generative supervised DNN model for channel estimation in comparison to LS.}
\end{figure*}
The comparison for $3$ complex Gaussian channel taps is provided in Fig. \ref{fig:threetaps} in terms of MSE including the DNNs trained with $3$ different number of pilots. Note that LS channel estimation can nearly give the same performance whether the number of pilots is $10, 20,$ or $25$. Hence, its performance is only given for $25$ pilots that are sent at the beginning of each coherence interval. As can be seen from this plot, the key parameter that determines the efficiency of the proposed model is the number of pilots. That is, doubling the number of pilots from $10$ to $20$ significantly enhances the performance. Further increase does not have much impact. Hence, it can be deduced that $20$ pilots are reasonable to train an OFDM system that has $64$ subcarriers. The most interesting observation related with Fig. \ref{fig:threetaps} is that although the proposed DNN model have only seen one-bit quantized OFDM samples, it can beat the LS estimation that works with unquantized OFDM samples up to $12$ dB SNR. Additionally, the DNN is always better than the LS channel estimation with one-bit ADCs over all SNRs. On the other hand, there is still room to improve to achieve the maximum likelihood channel estimation performance, which uses the unquantized samples for channel estimation and has significantly higher complexity than the proposed model. Specifically, maximum likelihood channel estimation has exponential complexity $\mathcal{O}(S^N)$ assuming the channel is one of the $S$ states, whereas the proposed model has complexity $\mathcal{O}(W^2)$ such that $W=32N^2$ according to Table \ref{tab:SLM}.

To observe the impact of the number of channel taps to the aforementioned model, the number of channel taps has been increased to $10$ while keeping all the parameters same. This case is depicted in Fig. \ref{fig:tentaps}. It is worth emphasizing that an increase in the number of channel taps leads to a slight improvement in the performance of the proposed DNN. That is, our model is better than the LS channel estimation for unquantized OFDM samples up to $14$ dB.

A natural question is the performance of the model when there are more subcarriers, such as $1024$. One of our empirical observations is that increasing the number of subcarriers significantly increases the complexity, and thus simulation time. This makes sense, because as seen in Table \ref{tab:SLM}, the number of parameters increases quadratically with the number of subcarriers. By this is meant that it is not a reasonable approach to simply increase the dimension of the proposed model to estimate the channel for higher number of subcarriers. To address this issue, large OFDM blocks have to be divided into smaller subblocks, and processed with kernels, which is left to future work. This can be seen as a type of convolution operation.

\subsection{Data Detection}
AE-OFDM architecture can be obtained by implementing the layers  $l_1$-$l_8$ in Fig. \ref{fig:E2E} as tensors, whose parameters are trained through gradient descent with the Adam optimizer. The performance of the proposed AE-OFDM is compared with the conventional uncoded OFDM communication both for unquantized and one-bit quantized samples that employs subcarrier basis detection, i.e., detecting the symbols according to the minimum Euclidean distance criterion after applying a single tap equalization. In particular, a theoretical benchmark error rate is obtained for the ideal unquantized OFDM for Rayleigh fading channels to see the efficiency of the AE-OFDM. For data detection, it is considered that there are $48$ data, $4$ pilot and $12$ guard subcarriers. The CP length is taken as $16$ without loss of any generality. To observe the efficiency of learning in high dimensions, the error rate of the AE-OFDM is presented for different values of $G$, namely for $1, 2, 4$. 

It may be expected that any deep learning based detection for one-bit ADCs can give an error performance in between the unquantized and one-bit quantized OFDM detection. However, this is not the case as demonstrated in Fig. \ref{fig:dataDet}. Specifically, AE-OFDM leads to a slight performance decrease with respect to the one-bit quantized OFDM for $G=1$. On the other hand, there is a performance boost if $G$ is doubled such that we can achieve a BER that is competitive with unquantized OFDM up to $6$ dB. More interestingly, AE-OFDM can beat the theoretical uncoded OFDM error rate in Rayleigh fading channels for $G=4$ up to $10$ dB. This gain resembles the gains seen from channel coding, which consume bandwidth, unlike time domain oversampling. It appears that AE-OFDM is an appealing alternative receiver architecture for low-to-medium SNRs. Note that the BER values are on the order of $0.01$ as expected for uncoded OFDM or any uncoded system in fading. However, the proposed model is flexible enough to be integrated with known coding schemes to yield much lower BER values, and this is left to future work.
\begin{figure} [!h] 
\centering 
\includegraphics [width=3.5in, angle = 0]{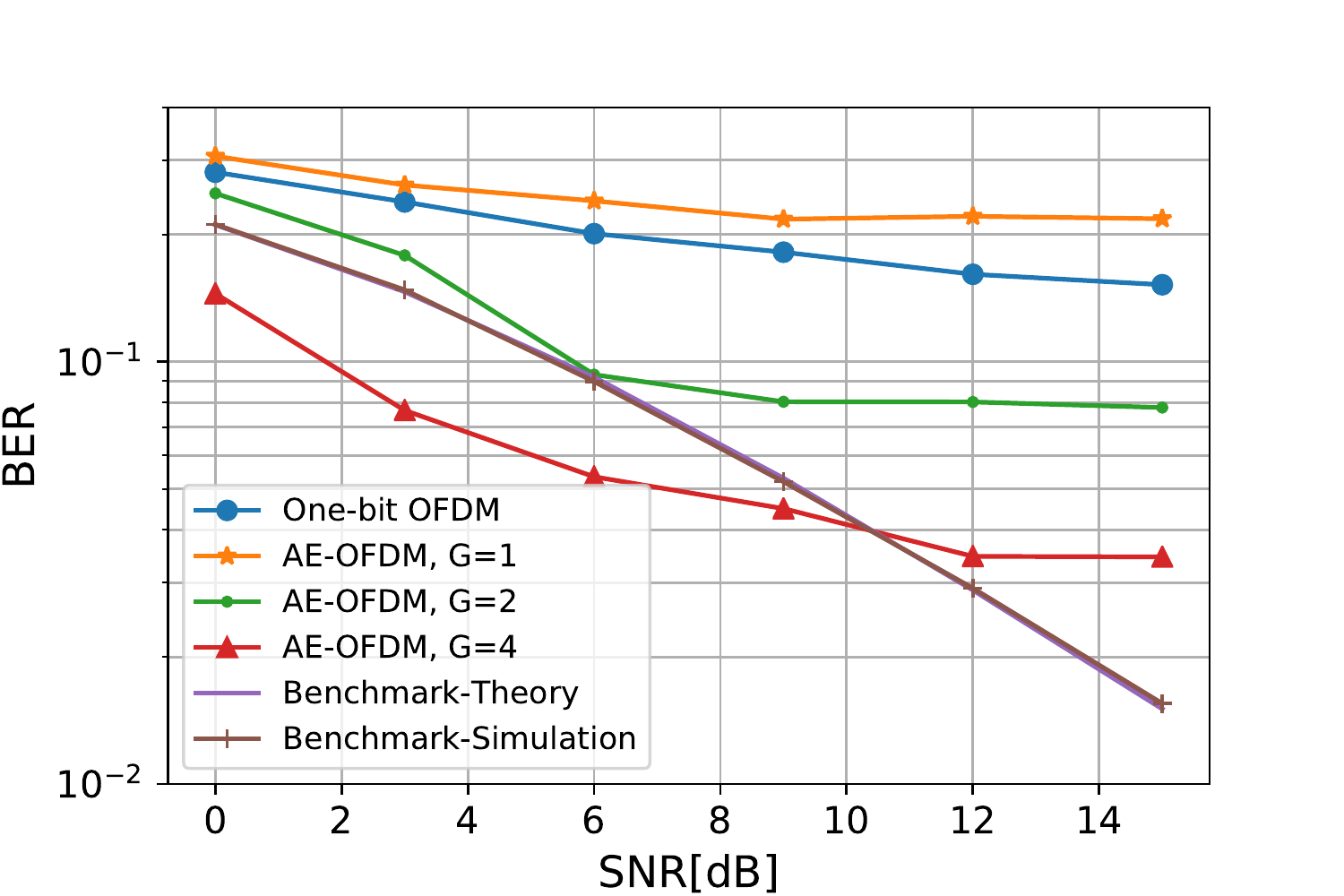}
\caption{The average BER in fading when there are $64$ subcarriers, each of which has been modulated with QPSK. The benchmark is presented both for the average theoretical BER of QPSK in Rayleigh fading and its simulation.}\label{fig:dataDet}
\end{figure}

\section{Conclusions and Future Work} \label{Conclusions}
Replacing the high resolution ADCs with one-bit ADCs can enable a large decrease in receiver cost and power consumption, but leads to a significant performance loss in single antenna OFDM receivers in terms of both channel estimation and data detection if conventional methods are utilized. This paper developed novel deep learning methods for OFDM systems for a moderate number of subcarriers. We proposed a generative supervised DNN for channel estimation using generative modeling and multi-layer neural networks. Our results reveal that reliable channel estimation can be achieved despite the nonlinear impairments of one-bit quantization. Additionally, we proposed an unsupervised autoencoder detection method for OFDM receivers equipped with one-bit ADCs. This model can achieve a satisfactory error rate when the number of neurons in the hidden layers before the quantization layer is sufficiently increased. Promisingly, our results demonstrate that unquantized OFDM performance can be beaten by deep learning methods. 

As future work, it would be interesting to generalize this work to more subcarriers. It is important to emphasize that processing the overall OFDM block with a fully connected neural layer is probably not a reasonable approach for $N\gg64$, and so developing a modified architecture would be necessary. It would be useful to consider more than $1$ transmit and/or receive antenna along with possible MIMO transceiver architectures. In particular, the proposed architectures can in principle be generalized for MIMO communication. However, an efficient method is needed to estimate the channel between each pair of transmit and receive antennas, since this could significantly increase the total number of required pilot symbols in the coherence time interval. Further, the received signals from multiple antennas have to be efficiently combined. Additionally, we have not considered initial acquisition in this paper, which includes (imperfect) time and frequency synchronization, which would be particularly challenging with low resolution quantization.

\appendix[Proof of Theorem \ref{Theorem}] \label{appendix}
Expanding $E[\textbf{F}\mathcal{Q}(\textbf{y}_p)\textbf{s}_p^H]$ using (\ref{quant_sig}) results in 
\begin{equation} \label{laststep}
\begin{split}
E[\textbf{F}\mathcal{Q}(\textbf{y}_p)\textbf{s}_p^H]  & = \textbf{F}\textbf{A}E[\textbf{y}_p\textbf{s}_p^H] + \textbf{F}E[\textbf{d}_p\textbf{s}_p^H] \\
&\myeqa \textbf{F}\textbf{A}E[\textbf{y}_p\textbf{s}_p^H] \\
&\myeqb \sigma_{pilots}^2\textbf{F}\textbf{A}\textbf{H}\textbf{F}^H \\    
\end{split}
\end{equation}
where (a) is due to Lemma \ref{Lemma}, (b) is due to 
\begin{equation} 
E[\textbf{y}_p\textbf{s}_p^H] = \sigma_{pilots}^2\textbf{H}\textbf{F}^H.
\end{equation}

Bussgang's theorem states that if the input to the memoryless system $\mathcal{Q}(\cdot)$ is a zero mean Gaussian process, which is the case for $\textbf{y}_p$, the input-output cross-correlation matrix is proportional to the input auto-correlation matrix such that 
\begin{equation}
C_{{\textbf{y}_p}{\textbf{r}_p}}=\textbf{A}C_{{\textbf{y}_p}{\textbf{y}_p}}
\end{equation}
where 
\begin{equation}\label{corr_mat}
\textbf{C}_{{\textbf{y}_p}{\textbf{y}_p}} = E[\textbf{y}_p\textbf{y}_p^H]
\end{equation}
\begin{equation}
\textbf{C}_{{\textbf{y}_p}{\textbf{r}_p}} = E[\textbf{y}_p\textbf{r}_p^H]
\end{equation}
and \textbf{A} is a diagonal matrix and its $k^{th}$ element is 
\begin{equation}\label{derivative}
[\textbf{A}]_{k,k}=E[\mathcal{Q}'(\textbf{y}_{pk})]=E[2\delta(\textbf{y}_{pk})].
\end{equation}
Using Gaussian probability distribution function in (\ref{derivative}) results in \cite{Papoulis}
\begin{equation}\label{A}
\textbf{A} =\sqrt{\frac{2}{\pi}{\!}}\ \left[\text{diag}(\textbf{C}_{{\textbf{y}_p}{\textbf{y}_p}})\right]^{-\frac{1}{2}}
\end{equation}
where \text{diag($\textbf{C}_{{\textbf{y}_p}{\textbf{y}_p}}$)} refers to the diagonal matrix composed of the diagonal terms of $\textbf{C}_{{\textbf{y}_p}{\textbf{y}_p}}$. 

Expressing (\ref{corr_mat}) as
\begin{equation}
\textbf{C}_{{\textbf{y}_p}{\textbf{y}_p}} = E[\textbf{H}\textbf{F}^H\textbf{s}_p\textbf{s}_p^H\textbf{F}\textbf{H}^H]+\sigma_n^2\textbf{I}_N
\end{equation}
which is equal to
\begin{equation}
\textbf{C}_{{\textbf{y}_p}{\textbf{y}_p}} = E[\textbf{F}^H\boldsymbol{\Lambda}\textbf{s}_p\textbf{s}_p^H\boldsymbol{\Lambda}^H\textbf{F}]+\sigma_n^2\textbf{I}_N
\end{equation}
leads to
\begin{equation}\label{diagCyp}
\text{diag($\textbf{C}_{{\textbf{y}_p}{\textbf{y}_p}}$)}=(\sigma_{chn}^2\sigma_{pilots}^2+\sigma_n^2)\textbf{I}_N.
\end{equation}
Substituting (\ref{diagCyp}) in (\ref{A}) produces
\begin{equation}\label{Aderived}
\textbf{A} =\sqrt{\frac{2}{\pi(\sigma_{chn}^2\sigma_{pilots}^2+\sigma_n^2)}}\ \textbf{I}_{N}.
\end{equation}
Using (\ref{Aderived}) in (\ref{laststep}) along with (\ref{circ}) completes the proof.

\end{document}